\newcommand{\ind}{\perp\!\!\!\!\perp} 
\newtheorem{theorem}{Theorem}[section]
\newtheorem{proposition}[theorem]{Proposition}
\theoremstyle{definition}
\theoremstyle{remark}
\newtheorem*{remark}{Remark}
\newcommand{\argmin}{\mathop{\rm argmin}}
\def\X{\mathbf{X}}
\def\Prob{\mathbb{P}}
\def\bSigma{\boldsymbol{\Sigma}}
\def\wt{\widetilde}
\def\wh{\widehat}
\def\mA{\mathbb{A}}
\def\mE{\mathbb{E}}
\def\mR{\mathbb{R}}
\def\mX{\mathbb{X}}
\def\mY{\mathbb{Y}}
\def\cN{\mathcal{N}}
\def\cS{\mathcal{S}}
\def\ind{\perp\!\!\!\perp}
\def\mX{\mathbf{X}}
\def\rX{X}
\def\mY{\mathbf{Y}}
\def\rY{Y}
\def\rW{W}
\def\mA{\mathbf{A}}
\def\rA{A}
\def\mAt{\wt{\mathbf{A}}}
\def\rAt{\wt{A}}
\def\mXAt{\wt{\mathbf{X}}^A}
\def\rXAt{\wt{X}^A}
\def\vXAt{\wt{x}^A}
\def\eqdist{\overset{\mathcal{D}}{=}}
\begin{document}

\def\spacingset#1{\renewcommand{\baselinestretch}%
{#1}\small\normalsize} \spacingset{1.25}


\title{Gold after Randomized Sand: \\ Model-X Split Knockoffs for Controlled Transformation Selection 
}
\author{Yang Cao$^\star$, Hangyu Lin$^\dagger$, Xinwei Sun$^\ddagger$, Yuan Yao$^\dagger$\\
    ~\\
    $^\star$\textit{Yale University} \\
    $^\dagger$\textit{The Hong Kong University of Science and Technology} \\
    $^\ddagger$\textit{Fudan University}
}

\date{}
\maketitle

\begin{abstract}
    Controlling the False Discovery Rate (FDR) is critical for reproducible variable selection, especially given the prevalence of complex predictive modeling. The recent Split Knockoff method, an extension of the canonical Knockoffs framework, offers finite-sample FDR control for selecting sparse transformations but is limited to linear models with fixed designs. Extending this framework to random designs, which would accommodate a much broader range of models, is challenged by the fundamental difficulty of reconciling a random covariate design with a deterministic linear transformation. To bridge this gap, we introduce Model-X Split Knockoffs. Our method achieves robust FDR control for transformation selection in random designs by introducing a novel auxiliary randomized design. This key innovation effectively mediates the interaction between the random design and the deterministic transformation, enabling the construction of valid knockoffs. Like the classical Model-X framework, our approach provides provable, finite-sample FDR control under known or accurately estimated covariate distributions, regardless of the response's conditional distribution. Importantly, it guarantees at least the same, and often superior, selection power as standard Model-X Knockoffs when both are applicable. Empirical studies, including simulations and real-world applications to Alzheimer's disease imaging and university ranking analysis, demonstrate robust FDR control and improved statistical power.
\end{abstract}

\section{Introduction}

Variable selection is a fundamental problem in statistical research, focusing on identifying which variables from a large candidate pool significantly influence an outcome. A key challenge in this area is ensuring reproducibility and replicability, as the inclusion of more predictors often increases predictive power but risks over-selection in finite samples. To address this, controlling the False Discovery Rate (FDR) has become a central concern. The seminal work of \cite{benjamini1995controlling} introduced the concept of FDR and proposed a method to control it in settings with independent hypothesis tests or orthogonal designs. This groundbreaking contribution laid the foundation for extensive theoretical and practical advancements in FDR control \citep{benjamini2001control, storey2002direct, reiner2003identifying, benjamini2010discovering}.

Building on this, \cite{barber2015controlling} introduced the Knockoff method, which extended FDR control with finite-sample guarantees to general fixed design settings, including sparse linear models with arbitrary design matrices. This represented a significant advance over earlier methods, which were often limited to specific design classes. The Knockoff framework has since inspired considerable research \citep{dai2016knockoff, xu2016false, barber2019knockoff}. 

Further generalization was recently achieved by \cite{cao2024controlling, cao2024split}, who introduced the concept of \emph{transformational sparsity}. While traditional Knockoff methods focus on sparsity in the original measured covariates, transformational sparsity considers sparsity in \emph{predetermined} transformations (often linear) of the parameters. This paradigm is particularly relevant in applications where sparsity manifests in derived features rather than raw variables. For example, in genomic research, disease mechanisms often involve biological pathways (linear combinations of gene effects) rather than isolated genes \citep{ramanan2012pathway}; in neuroimaging, wavelet transforms are applied to MRI data to extract meaningful patterns \citep{donoho1995adapting, bullmore2004wavelets}; and in economics, temporal contrasts like trend filtering are often analyzed instead of raw features \citep{kim2009ell_1}. To address this, \cite{cao2024controlling} introduced the Split Knockoff framework, which extends the original Knockoff method to control the false discovery rate (FDR) with finite-sample guarantees under transformational sparsity in fixed design settings. This framework was further generalized to control directional FDR in \cite{cao2024split}, which finds successful applications in Alzheimer's disease image analysis and statistical ranking with multiple comparisons.

In another direction, \cite{candes2018panning} introduced the Model-X Knockoff framework, which enables FDR control in random design settings. This method leverages knowledge of the marginal distribution of the design matrix and identifies the \emph{Markov blanket} -- the smallest subset of variables that renders the outcome independent of all others -- allowing variable selection in a wide range of statistical models. The Model-X framework represents a significant advance, accommodating diverse modeling scenarios and inspiring substantial research \citep{barber2020robust, romano2020deep, ren2023knockoffs, ren2024derandomised}.

However, the Model-X framework is not directly compatible with transformational sparsity, where sparsity is defined with respect to deterministic transformations of the parameters rather than the associated variables themselves. For instance, in pairwise comparison problems, identifying that an outcome depends on two variables (as part of the Markov blanket) does not indicate whether these variables exert the same or different effects on the outcome. While Model-X Knockoffs identify influential variables, transformational sparsity concerns specific relationships or null hypotheses about transformed parameters, such as testing the equality of effects.

To address this gap, this paper introduces the \emph{Model-X Split Knockoff} method, which extends the Model-X Knockoff framework to handle transformational sparsity in random design settings. This novel method provides FDR control with finite-sample guarantees under transformational sparsity across a broad range of statistical models, beyond the linear settings considered in earlier works \citep{cao2024controlling, cao2024split}. Moreover, as will be demonstrated, the Model-X Split Knockoff method achieves at least the same selection power as the original Model-X Knockoff method and often outperforms it in empirical applications when both methods are applicable.

Specifically, this paper considers models where the conditional expectation of the response $Y\in \mathbb{R}$ given the \emph{random} designs $\rX:= (\rX_1, \dots, \rX_p)\in \mathbb{R}^p$ subject to the marginal distribution $P^*_{\rX}$, as well as the \emph{deterministic} transformation of parameters, is defined as follows:
\begin{align}
\label{eq: gen_int}
\mathbb{E}(Y|X) = f(X\beta^*), \quad \gamma^* = D\beta^*,
\end{align}
where $f: \mathbb{R}\to \mathbb{R}$ is the link function, and $D \in \mathbb{R}^{m \times p}$  is a predetermined linear transformation matrix that induces sparsity in the transformed coefficient vector
$\gamma^* \in \mathbb{R}^m$. Transformation selection focuses on identifying the support set of 
$\gamma^*$, i.e. $\mathcal{H}_{1}^{\gamma} = \{i : \gamma^*_i \neq 0\}$, rather than directly identifying the support of $\beta^*\in \mathbb{R}^p$. For example, in multiple comparison problems, $D$ is the graph difference matrix such that $(D\beta^*)_{(j,k)}:=\beta^*_j-\beta^*_k$ for each pair $(j,k)$ in an edge set. In this case, $\gamma^*_{(j, k)}=0$ corresponds to the null hypothesis $\mathcal{H}^{j,k}_0: \beta_j^*=\beta_k^*$, meaning that covariate $X_j$ has the same conditional effect as $X_k$. 

A significant challenge in working with model \eqref{eq: gen_int} arises from its inherent structure, which combines the \emph{random} design $X$ with a \emph{deterministic} transformation $D$, ultimately defining $\gamma^*$, whose sparsity is the primary focus. In this transformational sparsity scenario, there is no straightforward design matrix for $\gamma^*$ to directly work with, as neither $X$ alone nor $D$ directly maps $\gamma^*$ to the observed outcome $Y$. Instead, the randomness of $X$ becomes entangled with the deterministic transformation $D$, resulting in an implicit subspace constraint. This unique scenario poses challenges that lie outside the scope of the traditional Model-X framework, introducing substantial complexities for both methodological development and theoretical analysis.

To address these challenges, we adopt the Split Knockoffs approach introduced by \cite{cao2024controlling}, which considers the lifted parameter space $(\beta,\gamma)$ by relaxing the subspace constraint in \eqref{eq: gen_int} to its Euclidean neighborhood. To reconcile the random design $X$ with the deterministic design implicitly imposed by the subspace constraints, we introduce an auxiliary random vector $\rA \in \mathbb{R}^m$. This random vector is specifically constructed to map the constraint on $\gamma^*$ and its neighborhood to the measurement associated with the response, effectively introducing a randomized design associated with $\gamma^*$ into the model. This facilitates the construction of Model-X Split Knockoffs for inference on the transformed parameters $\gamma^*$. Specifically, the auxiliary randomized design $\rA$ transforms $\rX\beta^*$ into the following random design on the lifted parameter $(\beta^*,\gamma^*)$:
\begin{align}
\label{eq: decomposition}
    \rX\beta^*=\rXAt\beta^* + \rA\gamma^*, \quad \rXAt:=X-AD.
\end{align}
This decomposition holds because $\rA\gamma^* = \rA D\beta^*$ for all $\rA$. In this formulation, $\beta^*$ associated with the transformed random design $\rXAt = \rX - \rA D$, while $\gamma^*$ --- which captures the transformational sparsity --- is linked to the auxiliary randomized design $\rA$. This decomposition enables us to design Model-X split knockoffs for $\rA$ conditional on $\rXAt$, provided some exchangeability and independence conditions are satisfied (see Section \ref{sec: construction} for details). Provable false discovery rate (FDR) control is guaranteed as long as $P^*_X$ is known, bootstrapped, or can be accurately estimated (see Section \ref{sec.fdrcontrol}). Furthermore, the randomized design $\rA$ and the relaxation of the subspace constraint, from $\gamma^* = D\beta^*$ to its Euclidean neighborhood, can be leveraged to enhance selection power (see Section \ref{sec: w-statistics} and discussions therein).  

Since the proposed method is grounded in the Model-X framework and does not require specifying the conditional distribution  $P_{\rY|\rX}$, it accommodates more complex models beyond simple linear ones.  Additionally, this separation provides greater flexibility to optimize selection power compared to the original Model-X Knockoff method when $D = I$, as it imposes no assumptions on the distribution of $\rA$. Benefiting from this flexibility, the Model-X Split Knockoff method is guaranteed to achieve at least the same power as the Model-X Knockoff method, and empirical results demonstrate that it often achieves superior selection power in practice.

While the choice of the auxiliary random vector $\rA$ could, in principle, be arbitrary, subsequent steps in the Model-X Split Knockoff procedure are substantially dependent on the joint distribution of $(X, \rA)$. Although a general algorithm that can operate with an arbitrary joint distribution of $(X, \rA)$ exists, this paper will introduce specific constructions of $\rA$ tailored to canonical examples, such as pairwise comparisons. The motivation for developing these tailored constructions is to facilitate more efficient computation and to improve statistical power, particularly in scenarios where the distribution of $X$ is unknown and must be estimated from data.

To evaluate the performance of our proposed method, we conduct experiments on both simulated and real-world datasets, including the Alzheimer’s Disease and world college rankings. Simulation experiments using both regression and pairwise comparison data from random graphs validate the method's effectiveness. For the Alzheimer’s Disease dataset, the method successfully identifies abnormal regions and connections associated with the disease. Furthermore, it identifies important and significant gaps between college pairs in the world college ranking dataset.

We introduce the Model-X Split Knockoff methodology in Section~\ref{sec.method} and its theoretical results for FDR control in Section~\ref{sec.fdrcontrol}. Simulation experiments are detailed in Section~\ref{sec.simu}, followed by applications on real-world datasets in Section~\ref{sec.alz} and Section~\ref{sec.college}.

\section{Methodology}
\label{sec.method}

In this paper, we address the transformational sparsity problem, modeled in Equation \eqref{eq: gen_int} and quoted here for convenience: $\mE(\rY|\rX) = f(\rX\beta^*)$ and $\gamma^* = D\beta^*$.  
The primary goal of this paper is to estimate the support set of $\gamma^*$, defined as $\mathcal{H}^{\gamma}_{1} = \{i : \gamma^*_i \ne 0\}$, while controlling the false discovery rate. A key challenge arises from the interplay of a random design $\rX$ and a deterministic transformation $D$, where neither directly maps $\gamma^*$ to the outcome $\rY$.

To overcome this, we draw inspiration from the variable splitting technique recently proposed by \cite{cao2024controlling}. We introduce a new auxiliary design for $\gamma^*$ and integrate it into the model to facilitate inference on $\gamma^*$. The specifics of constructing this valid design are detailed below.

\subsection{Variable Splitting and Auxiliary Randomized Design}
An auxiliary random vector $\rA = (\rA_1, \dots, \rA_{m})$  is generated as in \eqref{eq: decomposition} to manually construct the design for $\gamma^*$, enabling the reformulation of model \eqref{eq: gen_int} as:
\begin{align}
\label{eq:decomp}
    \mE(Y|\rXAt, A)  = f(\rXAt\beta^* + \rA \gamma^*).
\end{align}
Recall that the equation holds because $\rXAt: =\rX-\rA D$ and $A\gamma^* = AD\beta^*$.

In this way, we successfully inserted an auxiliary $\rA$ for the parameter of interest $\gamma^*$, and lifted the model \eqref{eq: gen_int} into a higher-dimensional space including both $\beta^*$ and $\gamma^*$. With Equation \eqref{eq:decomp}, a split knockoff copy $\rAt$ could be constructed following certain necessary constraints to allow variable selection with the control of the false discovery rate in $\gamma^*$. After such a construction, standard Split Knockoff procedures as described in \cite{cao2024controlling} could be applied to finalize the procedure. In the following section, we will proceed to the step of constructing the Split Knockoff copy.

\subsection{Construction of Model-X Split Knockoffs}

\label{sec: construction}
Similarly to \citep{candes2018panning, cao2024controlling}, the Model-X Split Knockoff requires the construction of Split Knockoff copies to achieve the false discovery rate control to be shown in Section \ref{sec.fdrcontrol}. The random\footnote{To account for the randomness introduced by our construction, one can apply the e-value scheme of \citep{ebhq, lee2024boosting} detailed in Supplementary Material Section H.} vector $\rAt = (\rAt_1, \dots, \rAt_m)$ is called a Split Knockoff copy with respect to $(\rXAt, \rA)$ if the following conditions hold.
\begin{itemize}
    \item Exchangeability: for any subset $S \subseteq  \{1, \dots, p\}$,
    \begin{align}
         (\rXAt, \rA, \rAt)_{\text{swap}(S)} \eqdist (\rXAt, \rA, \rAt)\label{eq: exchangeability}.
    \end{align}
    \item Conditional independence: $\rAt \ind \rY | (\rXAt, \rA)$.
\end{itemize}
Here, $ (\rXAt, \rA, \rAt)_{\text{swap}(S)}$ means to swap the features in $S$ between $\rA$ and $\rAt$. For example, if $\rA \in \mR^{2}, S = \{1\}$, then $(\rXAt, \rA_1, \rA_2,\rAt_1, \rAt_2)_{\text{swap}(S)} = (\rXAt, \rAt_1, \rA_2, \rA_1, \rAt_2)$.

In the cases where the distribution of $\rX$ is well known, for any choice of $\rA$, $\rAt$ can be constructed sequentially to satisfy the exchangeability condition \eqref{eq: exchangeability}, as shown in Algorithm \ref{algo: sequential generation}, which is similar to \cite{candes2018panning}. The proof of Algorithm \ref{algo: sequential generation} satisfies the exchangeability is given in Supplementary Material Section A. 

\begin{algorithm}
\caption{Sequential Generation of Split Knockoff Copies}
\label{algo: sequential generation}
\begin{algorithmic}
\Require the distribution of $\rX$
\State \textbf{Generate} random variable $\rA$ and the respective random variable $\rXAt$
\For{$j = 1: m$}
    \State sample $\rAt_j$ from the conditional distribution: $\mathcal{L}(\rA_j|\rA_{-j}, \rAt_{1:j-1}, \rXAt)$
\EndFor
\end{algorithmic}
\end{algorithm}

Although Algorithm \ref{algo: sequential generation} gives the general construction of the Split Knockoff copy $\rAt$, it might face computational challenges in practice as the computational cost increases rapidly as $m$ increases. Moreover, the generated Split Knockoff copies can be highly correlated, which harms the selection power. In the following, we deal with two canonical examples to present refined constructions of Split Knockoff copies, which leads to lower computational costs and better selection power.

\subsubsection{Construction for Normal Distributions}

\label{sec: normal}

In the cases where the distribution of $\rX$ follows or can be well approximated by the normal distribution $\mathcal{N}(0, \Sigma_{\rX})$, the natural choice of the pair $(\rA, \rAt)$ is to generate it from a joint normal distribution of $(\rX, \rA, \rAt)$, as suggested by \cite{candes2018panning}. Especially, as we will show later, among all the choices of joint normal distributions, the choices that make the ``intercept'' $\rXAt$ independent from the auxiliary design $\rA$ are expected to achieve higher selection power.

The generation process begins with $\rA$ drawn from 
$(\rX, \rA) \sim \mathcal{N}(\mu, \Sigma)$, 
where $\mu = \mathbf{0}_{m+p}$ and 
$\Sigma = (\Sigma_{\rX} \;\; \Sigma_{\rA\rX}^\top;\;\; \Sigma_{\rA\rX} \;\; \Sigma_{\rA})$, with $\Sigma_{\rA}$ and $\Sigma_{\rA\rX}$ being two suitable matrices which could be used to optimize the selection power. With some calculations detailed in Supplementary Material Section F,  
conditional on the observation $\rXAt = \vXAt$, the Split Knockoff copy $\rAt$ could be generated from the joint distribution
\begin{align}
\label{eq: A-tilde-general}
    (\rA, \rAt)|(\rXAt = \vXAt) \sim \mathcal{N}\left\{ \left( \begin{array}{c} \wt{\mu}_{\vXAt} \\ \wt{\mu}_{\vXAt} \end{array} \right), \left( \begin{array}{cc} \wt{\Sigma}_{\rA} & \wt{\Sigma}_{\rA} - \mathrm{diag}(\mathbf{s}) \\ \wt{\Sigma}_{\rA} - \mathrm{diag}(\mathbf{s}) & \wt{\Sigma}_{\rA} \end{array} \right) \right\},
\end{align}
where $\mathbf{s}\in \mathbb{R}^m$ is a non-negative vector which makes the above covariance matrix semi-positive definite, $\wt{\mu}_{\vXAt} := (\Sigma_{\rA\rX} -\Sigma_{\rA} D)\Sigma_{\rXAt}^{-1}(\vXAt)^\top$ with $\Sigma_{\rXAt} := \Sigma_{\rX} - \Sigma_{\rX\rA}^\top D-\Sigma_{\rX\rA}D^\top +D^\top \Sigma_{\rA}D$, and $\wt{\Sigma}_{\rA} := \Sigma_{\rA} - (\Sigma_{\rA\rX} -\Sigma_{\rA} D)\Sigma_{\rXAt}^{-1}(\Sigma_{\rA\rX}^\top -D^\top \Sigma_{\rA})$. One can verify that the exchangeability \eqref{eq: exchangeability} holds from the joint distribution of $(\rX, \rA, \rAt)$.

We are going to specify $\Sigma_{\rA}$ and $\Sigma_{\rA\rX}$ in order to optimize the selection power.
\begin{enumerate}
    \item[1]\textbf{Choice of $\Sigma_{\rA\rX}$.} As discussed in \citep{barber2015controlling, candes2018panning}, the vector $\mathbf{s}$  in Equation \eqref{eq: A-tilde-general} should be as large positive as possible to maximize the selection power. It could be verified that $\mathbf{s}$ needs to satisfy 
    $\mathrm{diag}(\mathbf{s})\preceq 2 \wt{\Sigma}_{\rA}\preceq 2\Sigma_{\rA},$
    where the equality of the second inequality occurs if and only if $\Sigma_{\rA\rX} = \Sigma_{\rA} D$. This gives us evidence that taking $\Sigma_{\rA\rX} = \Sigma_{\rA} D$, i.e. $\rXAt\perp\rA$ as $cov(\rXAt,\rA) = \Sigma_{\rA\rX} - \Sigma_{\rA} D = 0$, helps improve the selection power.
    \item[2]\textbf{Choice of $\Sigma_{\rA}$.}  Another intuition on improving the selection power is to improve the incoherence conditions as mentioned in \cite{cao2024controlling}. In this spirit, one shall take $\Sigma_{\rA} = \alpha I_m$ for some $\alpha>0$ to improve the selection power.
\end{enumerate}

In summary, the choice of $\Sigma_{\rA}$ and $\Sigma_{\rA\rX}$ to optimize the selection power is $\Sigma_{\rA} = \alpha I_m$ and $\Sigma_{\rA\rX} = \alpha D$, where $\alpha>0$ satisfies $\Sigma_{\rX} - \alpha D^\top  D \succeq 0$. In this case, let the non-negative vector $\mathbf{s}$ satisfy $\mathrm{diag}(\mathbf{s})\preceq2\alpha I_m$, the joint distribution of $(\rX, \rA, \rAt)$ is
\begin{align}
\label{eq.A-tilde-general}
    \mathcal{N}\left\{ \left( \begin{array}{c} \mathbf{0}_p \\ \mathbf{0}_m\\ \mathbf{0}_m \end{array} \right), \left( \begin{array}{ccc} \Sigma_{\rX} & \alpha D^\top  &  \alpha D^\top  - \mathrm{diag}(\mathbf{s})D^\top   \\ \alpha D & \alpha I_m & \alpha I_m - \mathrm{diag}(\mathbf{s})\\  \alpha D - \mathrm{diag}(\mathbf{s})D & \alpha I_m - \mathrm{diag}(\mathbf{s}) & \alpha I_m\end{array}  \right) \right\}.
\end{align}
This choice of $\rA$ and $\rAt$ ensures $(\rA, \rAt)\perp \rXAt$, which enhances the selection power as validated by experiments in Section~\ref{sec.simu.normal}. Moreover, when $\Sigma_X$ is unknown, one can see a good estimation of $\Sigma_X$ leads to a robust FDR control in Section~\ref{sec.fdrcontrol}.

\subsubsection{Construction for Pairwise Comparisons}

\label{sec: construct pairwise comparisons}

In this section, we discuss the case of pairwise comparisons where the distribution of the design matrix $\rX$ can not be approximated by normal distributions. 

For example, 
let the pairwise comparisons be represented by a design matrix $\mX \in \mathbb{R}^{n \times p}$ and responses $\mY \in \mathbb{R}^{n}$. Each row $\rX \in \mathbb{R}^{p}$ of $\mX$ encodes a single comparison between a random pair of items $(j, k)$ where $j<k$, such that its entries are defined as $\rX_{j} = 1$, $\rX_{k} = -1$ and zeros elsewhere. 
Each element $\rY\in \mathbb{R}$ of $\mY$ is modeled from
\begin{equation}
    \Prob(\rY=1|\rX) = f(\rX\beta^*),  \ \Prob(\rY=-1|\rX) = f(-\rX\beta^*),\label{eq: law of y}
\end{equation}
where $\beta^*\in\mathbb{R}^p$ represents the unknown true scores for $p$ objects, and $f(x)$ is a cumulative distribution function, symmetric at 0, i.e. $f(x) = 1 - f(-x)$. To conduct pairwise comparisons, let $D\in\mR^{\frac{p(p-1)}{2}\times p}$ be the graph difference operator of a fully connected graph.\footnote{The construction detailed in this section is applicable when $D$ contains all the edges present in $\mX$. Should this condition not be met, one can simply augment $D$ to include all required edges.} Then pairs with large differences will be selected in model \eqref{eq: gen_int}.

\noindent\textbf{Sequential Constructions.} One direct way to construct the pair $(\rA, \rAt)$ is the sequential construction following Algorithm \ref{algo: sequential generation} with certain choice of $\rA$. 
The detailed construction for the above pairwise comparisons setting is shown in Supplementary Material Section F. However, as will be demonstrated later in Section \ref{sec: pairwise simulations}, these constructions not only incur high computational costs but also exhibit reduced selection power due to the inherent dependency among $(\rXAt, \rA, \rAt)$. This motivates our search for a new approach to construct Split Knockoff copies. Such an approach would not only reduce computational expense but also enhance selection power by ensuring $\rXAt$ is independent of $\rA$, a characteristic previously demonstrated to be advantageous for selection power in the case of normal distributions.

\noindent\textbf{Bootstrap+ Constructions.} Now we will present a novel method of generating Split Knockoff copies, initiated by an augmented bootstrap resampling procedure, so we term it as the bootstrap+ construction. 
First, each row $(\rX^r, \rY^r)$ of the new design matrix $\mX^r$ and response vector $\mY^r$ is a bootstrap sampling from the distribution of $(\rX, \rY)$ and augmented by zeros by:
\begin{enumerate}
    \item $\Prob(\rX^r = 0_p) = \Prob(\rX^r \neq 0_p) = \frac{1}{2}$. \footnote{One can not use a general probability $\Prob(\rX^r = 0_p) = p_0\neq 1/2$ here since $p_0\neq 1/2$ will destroy the exchangeability between $\tilde{A}$ and $A$.}
    \item When $\rX^r = 0_p$, $\rY^r$ is generated through \eqref{eq: law of y}, i.e. $\Prob(\rY^r = 1) = \Prob(\rY^r = -1) = \frac{1}{2}$.
    \item When $\rX^r \neq 0_p$, $(\rX^r, \rY^r)$ takes a random row from $(\mX, \mY)$ respectively. 
\end{enumerate}
This resampling enables the following batch construction of the Split Knockoff copies based on $(\rX^r, \rY^r)$ at a much lower computational cost.

\begin{enumerate}
    \item[1.] The random vector $\rA\in\mathbb{R}^{m}$, where $m = \frac{p(p-1)}{2}$, is generated to satisfy $\rXAt_r :=\rX^r-\rA D = 0_{p}$, i.e. if $X^r=0_p$ then $A=0_m$; if $X^r$ takes a random row $i$ of $\mX$ then $A=e_i$ selects the same row in $D$, where $\{e_i\}$ is the canonical basis of $\mR^{m}$. 
    \item[2.] The random vector $\rAt$ is generated based on $\rA$ by complementarily selecting zero when $\rA$ is non-zero, i.e. $\Prob(\rAt = 0_m|\rA\neq 0_m) = 1$, or uniformly from any canonical basis vector $e_i$ when $\rA$ is zero, i.e. $\Prob(\rAt = e_i|\rA = 0_m) = \Prob(\rA = e_i|\rA \neq 0_m)$.
\end{enumerate}
With the above construction, one can verify that $\rXAt_r \equiv 0_p$, while $[\rA, \rAt]\in\mathbb{R}^{2m}$ is the one-hot vector with one element being 1 and other elements being zeros. Moreover, $\Prob(\rA_j=1) = \Prob(\rAt_j=1)$ for all $j$, which leads to the exchangeability.

This procedure clearly lets $\rXAt_r$ independent from $\rA$, which could help improve the selection power as demonstrated in the Normal distribution case. Simulation experiments in Section \ref{sec: pairwise simulations} validate the selection power improvement with less computational cost compared with the sequential construction. Moreover, the bootstrap+ construction only requires evaluation of $\Prob(\rA = e_i|\rA \neq 0_m)$, which can be directly computed from the \textbf{known} empirical distribution $P_{\mX}$ of the random vector $\rX$. This leads to exact FDR control to be discussed later in Section \ref{sec.fdrcontrol}.

\subsection{Computation of $\rW$-statistics}
\label{sec: w-statistics}
With the above construction, we are now ready to compute the routine $\rW$-statistics
\begin{align}
    \rW_i = w_i\left( (\rXAt, \rA, \rAt), \rY \right),
\end{align}
for $i\in\{1, 2, \cdots, m\}$, similar to \citep{barber2015controlling, candes2018panning}. Similarly, the $\rW$-statistics needs to satisfy the \textit{flip-sign} property:
\begin{align}
\label{eq:swapW}
w_i \left( ( \rXAt, \rA, \rAt)_{\text{swap}(S)}, \rY \right) = 
\begin{cases} 
w_i \left( ( \rXAt, \rA, \rAt), \rY \right), & i \notin S, \\
-w_i \left( ( \rXAt, \rA, \rAt), \rY \right), & i \in S,
\end{cases}
\end{align}
for any $S\subseteq\{1, 2, \cdots, m\}$ and any $i\in\{1, 2, \cdots, m\}$.

There are many approaches, as shown in \citep{barber2015controlling, candes2018panning}, in constructing such $\rW$-statistics satisfying the above property. In this paper, we adopt the following specific construction proposed by \citep{cao2024controlling} that searches in the relaxed neighborhood of the subspace constraint to optimize the selection power. For hyperparameters $\lambda>0$, $\nu>0$, we optimize:
\begin{align}
    \label{minopt:W}
    (\beta(\lambda), \gamma(\lambda), \wt{\gamma}(\lambda)) & =\argmin_{\beta, \gamma, \wt{\gamma}} L(\mXAt\beta + \mA\gamma + \mAt\wt{\gamma}, y)+\cdots \nonumber\\
    &\cdots+ \frac{1}{\nu}(\|D\beta - \gamma\|_2^2 + \|D\beta - \wt{\gamma}\|_2^2)+ \lambda (\|\gamma\|_1 + \|\wt{\gamma}\|_1),
\end{align}
where the loss function $L$ is related to the model or link function, e.g., mean squared loss for linear regression, and cross-entropy loss for logistic regression.
The $\rW$-statistics is constructed as $
    \rW = |\gamma(\lambda)| - |\wt{\gamma}(\lambda)|$. 
    
The hyperparameters $\lambda$ and $\nu$ can be optimized to maximize the selection power. Here, $\lambda$ is the well-known $\ell_1$ regularization parameter. Meanwhile, $\nu$ is the parameter which controls the Euclidean relaxation gap between $D\beta$ and $\gamma$, and is previously introduced in Split LASSO \citep{cao2024controlling} in transformational sparsity problems. The effect of the hyperparameter $\nu$ for improving the selection power will be discussed by simulation experiments in Supplementary Material Section G.

With the above $\rW$-statistics, we are now ready to give the selectors of Model-X Split Knockoff. Let $q>0$ be our target FDR, two data-dependent threshold rules on a pre-set nominal level $q$ are defined as 
\begin{flalign*}
  &\bullet\mbox{(Model-X Split Knockoff)}\ \ \ T_q=\min\left\{\lambda\in\mathcal{W}:\frac{|\{i:\rW_i\le-\lambda\}|}{1\vee|\{i:\rW_i\ge \lambda\}|}\le q\right\},&
  \end{flalign*}
  \begin{flalign*}
  &\bullet\mbox{(Model-X Split Knockoff+)}\ \ \ T_q^+=\min\left\{\lambda\in\mathcal{W}:\frac{1+|\{i:\rW_i\le-\lambda\}|}{1\vee|\{i:\rW_i\ge \lambda\}|}\le q\right\},&
\end{flalign*}
or $T_q,T_q^+ =+\infty$ if the respective set is empty, where $\mathcal{W} = \{|\rW_j|: j = 1, 2, \cdots, m\}\backslash\{0\}$.
The selectors for the Model-X Split Knockoff (+) are  respectively defined as $\wh\cS=\{i:\rW_i\ge T_q\} \mathrm{\ or\ } \wh\cS=\{i:\rW_i\ge T_q^+\}$.

\subsection{Connections with Model-X Knockoff}
The Model-X Split Knockoff is a generalization of the Model-X Knockoff method into the problem of transformational sparsity, whose application includes, but is not limited to pairwise comparisons, wavelet transforms, fused LASSO, and trend filtering. Next, we argue that the Model-X Split Knockoff can achieve better selection power compared with the Model-X Knockoff when both are applicable.

To be specific, consider the case of $D = I_p$, the model \eqref{eq: gen_int} degenerates to the traditional variable selection problem that Model-X Knockoff can deal with. In this case, we show by the following proposition that our method achieves at least the same selection power as the Model-X Knockoff.

\begin{proposition}
\label{prop: higher selection power}
    Consider the case $D = I_p$ in the model \eqref{eq: gen_int}, there exists a pair of Model-X Split Knockoff copies $(\rA, \rAt)$, such that Model-X Split Knockoff with $(\rA, \rAt)$ achieves at least the same selection power as the Model-X Knockoff.
\end{proposition}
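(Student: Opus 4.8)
The plan is to exhibit an explicit construction of the auxiliary pair $(\rA, \rAt)$ in the $D = I_p$ case that reduces the Model-X Split Knockoff procedure to the ordinary Model-X Knockoff procedure, so that the two produce the same $\rW$-statistics (up to a choice of feature statistic that is available to both), hence the same selection set and the same power. Since Proposition~\ref{prop: higher selection power} only asserts \emph{existence} of such a pair, it suffices to pick one convenient construction and verify it is legal.

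First I would take $\rA$ to be deterministically equal to $\I_p$ in the sense that, row-by-row, the auxiliary design coincides with the identity transformation: concretely, choose the joint law of $(\rX, \rA)$ so that $\rA \equiv \I_p$ almost surely (equivalently, in the Gaussian template \eqref{eq.A-joint}, let $\alpha \to 0$ after rescaling, or simply bypass the Gaussian template and set $\rA$ degenerate). Then $\rXAt = \rX - \rA D = \rX - \I_p \I_p = \zero$, so the lifted model \eqref{eq:decomp} becomes $\mE(\rY \mid \rXAt, \rA) = f(\rA\gamma^*) = f(\gamma^*)$ with $\rA = \I_p$ playing exactly the role of the original random design acting on $\gamma^* = \beta^*$. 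Wait — this collapses the design entirely; instead the cleaner choice is to \emph{not} make $\rXAt$ vanish but to make $\rA$ carry all the randomness: set the joint law so that $\rA = \rX$ (using $D = I_p$, this is consistent since then $\rA\gamma^* = \rA D \beta^* = \rX\beta^*$ requires $\rXAt = \rX - \rA D = \rX - \rA = \zero$). So either way $\rXAt \equiv \zero$ and the lifted model is $\mE(\rY\mid\rA) = f(\rA\gamma^*)$ with $\rA \eqdist \rX$. Now I would construct $\rAt$ as an ordinary Model-X knockoff copy of $\rA$ with respect to its marginal $P^*_{\rX}$: the Split Knockoff exchangeability condition \eqref{eq: exchangeability} with $\rXAt \equiv \zero$ is literally the Model-X exchangeability condition on $(\rA, \rAt)$, and the conditional independence $\rAt \ind \rY \mid (\rXAt, \rA)$ reduces to $\rAt \ind \rY \mid \rA$, which is exactly the Model-X requirement. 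Hence $(\rA, \rAt)$ is a valid Split Knockoff copy and it is simultaneously a valid Model-X knockoff pair for the original problem.

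Next I would match the feature statistics. The $\rW$-statistic \eqref{minopt:W} for Model-X Split Knockoff is computed from $(\mXAt, \mA, \mAt, \y)$ via the relaxed optimization with parameters $\lambda, \nu$. With $\mXAt \equiv \zero$, the quadratic penalty terms $\tfrac1\nu(\|D\beta - \gamma\|_2^2 + \|D\beta-\wt\gamma\|_2^2)$ are minimized over $\beta$ for \emph{any} fixed $(\gamma,\wt\gamma)$ by taking $D\beta = \beta = \tfrac12(\gamma+\wt\gamma)$ (since $D = I_p$), and in the limit $\nu \to 0$ this forces $\gamma = \wt\gamma = \beta$ — not what we want, so instead I would simply observe that the \emph{class} of flip-sign $\rW$-statistics available to the Split Knockoff procedure on $(\zero, \mA, \mAt, \y)$ contains the class available to Model-X Knockoff on $(\mA, \mAt, \y)$: any Model-X feature statistic $w_j(\mA, \mAt, \y)$ satisfying the flip-sign property \eqref{eq:swapW} is, verbatim, a legal choice of $w_j$ for the Split Knockoff because $\mXAt$ does not enter it. Therefore, choosing the Split Knockoff $\rW$-statistic to equal the Model-X Knockoff $\rW$-statistic, and using the \emph{same} threshold rules $T_q$ / $T_q^+$ (which are identical functions of the $\rW$ vector in both methods), yields $\wh\cS_{\text{Split}} = \wh\cS_{\text{Model-X}}$, hence identical power. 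This gives ``at least the same'' power; the flexibility of additionally optimizing over $\nu$ and over non-degenerate choices of $\rA$ is what can make it strictly larger in practice, but that is not needed for the proposition.

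The main obstacle I anticipate is the degeneracy: forcing $\rXAt \equiv \zero$ makes $P^*_{\rX|\rX-AD}$ a point mass, and one must check that Algorithm~\ref{algo: sequential generation} and the FDR-control theorems of Section~\ref{sec.fdrcontrol} still apply in this degenerate regime — i.e., that nothing in the validity argument secretly requires $(\rXAt,\rA)$ to have a density or requires $\rA$ to be non-degenerate. I would handle this either by a direct check that the exchangeability and conditional-independence conditions (which are all that the FDR proof uses) hold for the degenerate construction, or, more robustly, by a limiting argument: take $\rA = \rX + \epsilon\rZ$ with $\rZ$ an independent standard Gaussian and $\epsilon \downarrow 0$, so $\rXAt = -\epsilon\rZ$ is a genuine (vanishing) random design, run the Gaussian construction of Section~3.2.1, and verify the selection set converges to the Model-X one. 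I expect the direct check to go through cleanly, since the FDR machinery only invokes \eqref{eq: exchangeability} and $\rAt \ind \rY\mid(\rXAt,\rA)$, both of which I verified above without any non-degeneracy hypothesis.
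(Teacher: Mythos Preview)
Your proposal is correct and takes essentially the same approach as the paper: set $\rA = \rX$ so that $\rXAt = \rX - \rA I_p = \zero$, whereupon the exchangeability condition \eqref{eq: exchangeability} and conditional independence requirement collapse to the ordinary Model-X Knockoff conditions, and any Model-X $\rW$-statistic is a legal Split Knockoff $\rW$-statistic, yielding identical selection sets and hence equal power. The paper's proof is a two-line version of exactly this argument (without your extra verification of the $\rW$-statistic matching or the degeneracy discussion, which are fine but unnecessary since the FDR machinery only uses the two conditions you checked).
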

The proof of this proposition is straightforward. Take $\rA = \rX$ in the construction of the Model-X Split Knockoff, then $\rXAt = \rX-\rA I_p = 0$. Under this condition, the model \eqref{eq:decomp} and the exchangeability condition \eqref{eq: exchangeability} degenerate precisely into the Model-X Knockoff framework. Therefore, Model-X Split Knockoff with this choice of $\rA$ have the same selection power with the Model-X Knockoff. 

As shown in Section \ref{sec: construction}, taking $\rA = \rX$ might not be the optimal choice in the Model-X Split Knockoff in terms of the selection power, that our specific choice in Section \ref{sec: construction} may achieve better selection power. Such a claim is further validated by simulation experiments in Section~\ref{sec.simu.normal} and the Alzheimer's disease study in Section~\ref{sec.alz}.

\section{FDR Control of Model-X Split Knockoff}
\label{sec.fdrcontrol}

In this section, we provide the robust FDR control guarantees under the proposed Model-X Split Knockoff framework. To be specific, we apply the ``leave-one-out'' technique introduced in \cite{barber2020robust} to establish robust FDR control when the distribution of $X$ can be estimated with a bounded deviation on the KL-divergence.

In practice, there are many scenarios where the distribution of $\rX$ is unknown and needs to be estimated. 
To be specific, let $P_X$ be the estimated distribution\footnote{Here, $P_X$ and $P^*_{X}$ represent the probability mass function for discrete random variables and probability density function for continuous random variables.} of $\rX$, and $P^*_{X}$ be the true distribution of $\rX$. Consider the following  sample Kullback–Leibler (KL) divergence for $j\in\{1, 2, \cdots, m\}$:
\begin{equation}
\label{eq: KL}
\widehat{\mathrm{KL}}_j: = \sum_{i=1}^n\log{\left[ \frac{P^*_{X}(\mXAt_i + \mA_i D)}{P_{X}(\mXAt_i + \mA_i D)}\frac{P_{X}\left\{\mXAt_i+ \mA_i(j)D \right\}}{P^*_{X}\left\{\mXAt_i + \mA_i(j) D \right\}} \right]},
\end{equation}
where $\mXAt_i$, $\mA_i$, $\mAt_i$ represent the $i$-th row of $\mXAt$, $\mA$, $\mAt$ receptively, and $\mA_i(j)$ represents the vector obtained by substituting the $j$-th element of $\mA_i$ with the $j$-th element of $\mAt_i$. Clearly, if $P_X = P^*_{X}$, e.g. the distribution of $X$ is known, the KL divergence $\widehat{\mathrm{KL}}_j = 0$ for $j\in\{1, 2, \cdots, m\}$. Meanwhile, in the cases where the true distribution $P^*_X$ is unknown and is estimated by $P_X$,  $\widehat{\mathrm{KL}}_j$ measures the impact of the estimation error to the $j$-th feature based on the observations $\mXAt, \mA, \mAt$. 

Now we are ready to state the main theorem on the robust false discovery control of the Model-X Split Knockoffs.

\begin{theorem}
\label{thm.robust}
Suppose the set of random vectors $(\rXAt, \rA, \rAt)$ satisfy the exchangeability \eqref{eq: exchangeability} when $X\sim P_X$. Then when $X\sim P^*_X$, for any $q>0$, $\epsilon \ge 0$, the following holds.
\begin{enumerate}
    \item For the Model-X Split Knockoff, there holds $
    \mE \left( \frac{|\{j: j\in \wh \cS \cap \mathcal{H}_0^{\gamma} \text{ and } \widehat{\mathrm{KL}}_j \leq \epsilon \}|}{|\wh \cS| + q^{-1}} \right) \leq q \cdot e^\epsilon$, 
leading to the following bound of $\mathrm{mFDR}$ defined in \cite{barber2015controlling}, 
\begin{equation*}
    \mathrm{mFDR} = \mE \left( \frac{|\{j: j\in \wh \cS\cap \mathcal{H}_0^{\gamma}  \}|}{|\wh \cS| + q^{-1}} \right) \leq \min_{\epsilon \geq 0} \left\{ q \cdot e^\epsilon + \Prob\left(\max_{j \in \mathcal{H}_0^{\gamma}} \widehat{\mathrm{KL}}_j > \epsilon\right) \right\}.
\end{equation*}
\item For the Model-X Split Knockoff+, there holds $
    \mE \left( \frac{|\{j: j\in \wh \cS \cap \mathcal{H}_0^{\gamma} \text{ and } \widehat{\mathrm{KL}}_j \leq \epsilon \}|}{|\wh \cS| \vee 1} \right) \leq q \cdot e^\epsilon$, 
which leads to the following bound of $\mathrm{FDR}$,
\begin{equation*}
    \mathrm{FDR} = \mE \left( \frac{|\{j: j\in \wh \cS \cap \mathcal{H}_0^{\gamma} \}|}{|\wh \cS| \vee 1} \right) \leq \min_{\epsilon \geq 0} \left\{ q \cdot e^\epsilon + \Prob\left(\max_{j \in \mathcal{H}_0^{\gamma}} \widehat{\mathrm{KL}}_j > \epsilon\right) \right\}.
\end{equation*}
\end{enumerate}
The proof of Theorem \ref{thm.robust} will be presented in Supplementary Material Section B.
\end{theorem}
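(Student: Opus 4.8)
The plan is to transport the ``leave-one-out'' change-of-measure argument of \cite{barber2020robust} to the split construction, observing that $(\rA,\rAt)$ conditionally on $\rXAt$ plays the role that $(\rX,\widetilde{\rX})$ plays in the ordinary Model-X proof, with $\rXAt$ acting as a fixed set of covariates one always conditions on, and with the true-vs.-working discrepancy entering only through the marginal law of $\rX=\rXAt+\rA D$. Throughout, write $\mE^{*}$ and $\Prob^{*}$ for expectation and probability under the true law $\rX\sim P^{*}_{X}$. First I would fix the data-generating structure: $\rA$ and the copy $\rAt$ are produced by kernels (Algorithm~\ref{algo: sequential generation} or its refinements) built from the \emph{working} law $P_X$, whereas $\rY\mid\rX$ comes from the true model \eqref{eq: gen_int}; since $\rX$ is a bijective recoding of $(\rXAt,\rA)$ and the rows are i.i.d., the law of $(\rXAt,\rA,\rAt,\rY)$ under $\rX\sim P^{*}_{X}$ is absolutely continuous with respect to its law under $\rX\sim P_X$ with Radon--Nikodym factor $L=\prod_{i=1}^n P^{*}_{X}(\mX_i)/P_X(\mX_i)=\prod_{i=1}^n P^{*}_{X}(\mXAt_i+\mA_iD)/P_X(\mXAt_i+\mA_iD)$, the $\rA$-, $\rAt$-, and $\rY$-kernels canceling. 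Two further facts make a null coordinate behave like a classical knockoff: $\rAt\ind\rY\mid(\rXAt,\rA)$ holds under both laws because $\rAt$ ignores $\rY$; and the conditional law of $\rY$ given the design depends on it only through $\rXAt\beta^{*}+\rA\gamma^{*}$, which is unchanged when the $j$-th coordinates of $\rA$ and $\rAt$ are exchanged whenever $\gamma^{*}_j=0$ (since then $\rA\gamma^{*}=\rA(j)\gamma^{*}$, with $\rA(j)$ the vector obtained from $\rA$ by replacing its $j$-th entry by $\rAt_j$). Combined with exchangeability \eqref{eq: exchangeability}, assumed to hold under $P_X$, this shows that for $\rX\sim P_X$ and any $j\in\mathcal{H}_0^{\gamma}$, the operation $\mathrm{swap}(\{j\})$ leaves the joint law of $(\rXAt,\rA,\rAt,\rY)$ invariant.

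The core is a per-coordinate sign inequality. Fix $j\in\mathcal{H}_0^{\gamma}$ and let $\mathcal{F}_j$ be the $\sigma$-field of events invariant under $\mathrm{swap}(\{j\})$; by the flip-sign property \eqref{eq:swapW} it contains $|\rW_j|$, all $\rW_k$ with $k\neq j$, and --- since $\widehat{\mathrm{KL}}_j$ from \eqref{eq: KL} depends only on $\rXAt,\rA$ and the $j$-th column of $\rAt$ and is antisymmetric under the swap --- it also contains $|\widehat{\mathrm{KL}}_j|$ and the event $\{\mathrm{sign}(\rW_j)=\mathrm{sign}(\widehat{\mathrm{KL}}_j)\}$. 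Conditioning on $\mathcal{F}_j$ collapses the sample space to the two-point orbit $\{\omega,\omega^{\mathrm{swap}(j)}\}$; under $P_X$ these carry equal conditional mass by the invariance just established, so under $P^{*}_{X}$ the conditional odds of $\omega$ against $\omega^{\mathrm{swap}(j)}$ equal $L(\omega)/L(\omega^{\mathrm{swap}(j)})=\exp\{\widehat{\mathrm{KL}}_j(\omega)\}$, because the two points differ precisely by replacing each $\mX_i=\mXAt_i+\mA_iD$ with $\mXAt_i+\mA_i(j)D$. As $\mathrm{swap}(\{j\})$ flips $\mathrm{sign}(\rW_j)$, a short case split on the sign of $\widehat{\mathrm{KL}}_j$ over the orbit gives, for every $\epsilon\ge0$,
\[
\mE^{*}\!\big[\mathbf{1}\{\rW_j>0,\ \widehat{\mathrm{KL}}_j\le\epsilon\}\mid\mathcal{F}_j\big]\ \le\ e^{\epsilon}\,\mE^{*}\!\big[\mathbf{1}\{\rW_j<0\}\mid\mathcal{F}_j\big],\qquad j\in\mathcal{H}_0^{\gamma}.
\]

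Given this, the conclusion follows the classical knockoff+ template. With $V^{-}:=|\{j\in\mathcal{H}_0^{\gamma}:\rW_j\le-T\}|$ and $V_\epsilon:=|\{j\in\mathcal{H}_0^{\gamma}:\rW_j\ge T,\ \widehat{\mathrm{KL}}_j\le\epsilon\}|$, the threshold definitions give, on $\{|\wh\cS|\ge1\}$, $|\{i:\rW_i\le-T_q\}|\le q|\wh\cS|$ and $1+|\{i:\rW_i\le-T_q^{+}\}|\le q(|\wh\cS|\vee1)$, so both $|\wh\cS|+q^{-1}$ and $|\wh\cS|\vee1$ are at least $q^{-1}(1+V^{-})$; hence the first displayed bound in each part of the theorem reduces to $\mE^{*}[V_\epsilon/(1+V^{-})]\le e^{\epsilon}$. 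I would prove this by revealing the null signs in decreasing order of $|\rW_j|$ --- at each step the revealed past lies in the relevant $\mathcal{F}_j$, because $\mathrm{swap}(\{j\})$ fixes $\rXAt$ and, by \eqref{eq:swapW}, fixes every $\rW_k$ with $k\neq j$ --- and invoking the optional-stopping argument of \cite{barber2020robust} on $V_\epsilon(t)/(1+V^{-}(t))$, the per-coordinate inequality replacing the exact coin-flip symmetry and producing the factor $e^{\epsilon}$. The passage to the stated mFDR/FDR bounds is then routine: split $|\wh\cS\cap\mathcal{H}_0^{\gamma}|$ according to whether $\widehat{\mathrm{KL}}_j\le\epsilon$, note that $|\{j\in\wh\cS\cap\mathcal{H}_0^{\gamma}:\widehat{\mathrm{KL}}_j>\epsilon\}|/(|\wh\cS|+q^{-1})\le\mathbf{1}\{\max_{j\in\mathcal{H}_0^{\gamma}}\widehat{\mathrm{KL}}_j>\epsilon\}$, take expectations, and minimize over $\epsilon$; setting $\epsilon=0$ with $P_X=P^{*}_{X}$ kills every $\widehat{\mathrm{KL}}_j$ and recovers Theorem~\ref{thm: exact fdr control}, as the remark there states.

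I expect the main obstacle to be making the ``two-point conditional'' step rigorous: producing a $\sigma$-field $\mathcal{F}_j$ that is simultaneously $\mathrm{swap}(\{j\})$-invariant and fine enough that conditioning on it reduces the problem to the orbit $\{\omega,\omega^{\mathrm{swap}(j)}\}$, and carrying $L$ through that conditioning. The feature specific to the split construction --- and the delicate point --- is that $\mathrm{swap}(\{j\})$ holds $\rXAt$ fixed while implicitly moving $\rX$ from $\rXAt+\rA D$ to $\rXAt+\rA(j)D$, which is precisely why the likelihood-ratio increment is $\widehat{\mathrm{KL}}_j$ rather than a single per-column ratio, and why the step where $\gamma^{*}_j=0$ is invoked is exactly the step that makes the null swap preserve the $P_X$-law. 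Since $\widehat{\mathrm{KL}}_j$ is swap-antisymmetric, only $|\widehat{\mathrm{KL}}_j|$ is $\mathcal{F}_j$-measurable, which forces the one-sided $e^{\epsilon}$ inflation rather than a two-sided bound; and because the various $\widehat{\mathrm{KL}}_j$ are not jointly invariant under the full null-swap group, checking that they can still be threaded through the supermartingale bookkeeping needs care --- though, once $(\rA,\rAt)$ is placed in the role of $(\rX,\widetilde{\rX})$ with $\rXAt$ treated as fixed covariates, this part is essentially a transcription of \cite{barber2020robust}.
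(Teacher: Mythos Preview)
Your proposal is correct and takes essentially the same approach as the paper: you adapt the leave-one-out change-of-measure technique of \cite{barber2020robust} to the split construction, placing $(\rA,\rAt)$ in the role of $(\rX,\widetilde{\rX})$ with $\rXAt$ treated as fixed covariates, and you correctly identify that the likelihood-ratio increment under $\mathrm{swap}(\{j\})$ is exactly $\exp(\widehat{\mathrm{KL}}_j)$ as defined in \eqref{eq: KL}. Your handling of the delicate split-specific point --- that the swap fixes $\rXAt$ while implicitly moving $\rX$ from $\rXAt+\rA D$ to $\rXAt+\rA(j)D$, and that $\gamma^{*}_j=0$ is precisely what makes the $P_X$-law swap-invariant --- matches the paper's reasoning.
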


Theorem \ref{thm.robust} states that the loss in the FDR control vanishes when the estimation on the distribution of $X$ becomes accurate. For the case that $P_X = P^*_X$, taking $\epsilon = 0$ in Theorem \ref{thm.robust} leads to the exact FDR control. In the following, we will give one concrete example of such exact FDR control in pairwise comparisons, and one concrete example of robust FDR control when the distribution of $X$ is normal.

\noindent{\textbf{Exact FDR Control Example: Pairwise Comparisons.}} While the condition $P_X = P^*_X$ is generally unachievable in practice, there are some situations where such a condition holds. For instance, consider the canonical example of pairwise comparisons in Section \ref{sec: construct pairwise comparisons}. In that case, the new design $X^r$ constructed by the \emph{bootstrap+} construction is based on boostrapping (augmented with zeros) the empirical distribution $P_{\mX}$ of the random vector $\rX$. Therefore, it is straightforward to verify that $(\rXAt_r, \rA, \rAt)$ satisfies the exchangeability \eqref{eq: exchangeability} as $P_{X^r} = P_{X^r}^* = g(P_{\mX})$ for some suitable function $g$, and the exact false discovery rate control is achieved.

\noindent{\textbf{Robust FDR Control Example: Normal Distributions.}} In the following, we provide concrete computation of $\max_j \wh{\mathrm{KL}}_j$ when $\rX$ follows the normal distribution $\mathcal{N}(\mathbf{0}_p,\Sigma_X)$. In this example, we construct the Model-X Split Knockoff as in \eqref{eq.A-tilde-general}.

\begin{theorem}
    \label{thm.KL-gauss}
    Denote $\Theta:=\Sigma^{-1}_X$ and $\wh{\Theta}:=\wh{\Sigma}^{-1}_X$, where $\wh{\Sigma}_X$ is an estimation of $\Sigma_X$. Suppose $\rX \sim \mathcal{N}(\mathbf{0}_p,\Sigma_X)$. Let $\Lambda_X$ the maximum eigenvalue of $\Sigma_X$, and define
    \begin{align*}
        \delta_\Theta & := 4\alpha \Vert D_j \Vert_2 \Vert \Delta \Vert_2 \left( \sqrt{2p \Lambda_X} + 2\alpha \Vert D_j \Vert_2^2 \right)
    \end{align*}
    for any choice of $\alpha > 0$ such that $\wh{\Sigma}_X - \alpha D^\top  D \succeq 0$ and $A \sim \cN(0,\alpha I_m)$.
    If $\frac{\log{(m+p)}}{n} = o(1)$, with probability at least $1 - \frac{2}{p} - \frac{2}{m}$, there holds 
    \begin{align*}
        \max_j \wh{\mathrm{KL}}_j \leq 2\delta_\Theta \sqrt{n\log{m}} \{1 + o_p(1)\}. 
    \end{align*}
\end{theorem}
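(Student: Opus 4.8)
The plan is to write $\widehat{\mathrm{KL}}_j$ in closed form under the Gaussian construction \eqref{eq.A-joint}--\eqref{eq.A-tilde-general}, to recognize it as a sum of $n$ (nearly) i.i.d.\ sub-exponential terms whose mean cancels to leading order, and then to combine a Bernstein-type deviation bound with a union bound over $j\in\{1,\dots,m\}$. First I would simplify the arguments in \eqref{eq: KL}: since $\rXAt=\rX-\rA D$, the first argument equals $\mXAt_i+\mA_i D=\mX_i$, while the second, $\mXAt_i+\mA_i(j)D$, differs from $\mX_i$ only by replacing the $j$-th entry of $\mA_i$ with that of $\mAt_i$; writing $\eta_{ij}:=\mAt_{ij}-\mA_{ij}$ and letting $D_j\in\mathbb R^p$ denote the $j$-th row of $D$ as a column vector, it equals $\mX_i+\eta_{ij}D_j^\top$. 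Substituting the $\mathcal N(\mathbf 0_p,\Sigma_X)$ and $\mathcal N(\mathbf 0_p,\wh\Sigma_X)$ log-densities and cancelling the normalizing constants gives
\begin{equation*}
\widehat{\mathrm{KL}}_j \;=\; -\sum_{i=1}^n \eta_{ij}\,\mX_i\Delta D_j \;-\; \tfrac12\Bigl(\sum_{i=1}^n \eta_{ij}^2\Bigr)\,D_j^\top\Delta D_j,\qquad \Delta:=\wh\Theta-\Theta,
\end{equation*}
equivalently $\widehat{\mathrm{KL}}_j=\tfrac12\sum_i\bigl(\mX_i\Delta\mX_i^\top-w_{ij}^\top\Delta w_{ij}\bigr)$ with $w_{ij}:=\mX_i^\top+\eta_{ij}D_j$, a difference of $\Delta$-quadratic forms at the two nearby points $\mX_i^\top$ and $w_{ij}$.

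Next I would center. Under the construction $(\mA_i,\mAt_i)$ is Gaussian with $\mA_i\sim\mathcal N(\mathbf 0_m,\alpha I_m)$ and $\mathrm{Cov}(\mA_{ij},\mAt_{ij})=\alpha-s_j$, so $\eta_{ij}\sim\mathcal N(0,2s_j)$ with $s_j\le 2\alpha$ and $\mE[\eta_{ij}^2]=2s_j$; moreover $\mX_i=\mXAt_i+\mA_i D$, and when $\wh\Sigma_X=\Sigma_X$ one has $\mXAt_i\ind(\mA_i,\mAt_i)$ and $\mE[\eta_{ij}\mX_i]=-s_j D_j^\top$, so $\mE[\widehat{\mathrm{KL}}_j]=0$ (consistently, $\widehat{\mathrm{KL}}_j\equiv 0$ when $\wh\Sigma_X=\Sigma_X$ since then all densities coincide). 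For general $\wh\Sigma_X$ the two summed means are both of order $n$ but their leading parts cancel, leaving a remainder of lower order than $\delta_\Theta\sqrt{n\log m}$. Writing $\xi_{ij}:=\eta_{ij}\mX_i\Delta D_j+\tfrac12\eta_{ij}^2 D_j^\top\Delta D_j$, one reduces to bounding the fluctuations of $\widehat{\mathrm{KL}}_j=-\sum_i\xi_{ij}$ about its (negligible) mean.

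Then the tail bound. Each $\xi_{ij}$ is a degree-$\le 2$ polynomial in the jointly Gaussian vector $(\mX_i,\eta_{ij})$, hence sub-exponential; bounding its $\psi_1$-norm via $\|\eta_{ij}\|_{\psi_2}^2\lesssim s_j\lesssim\alpha$, the high-probability deterministic bound $\|\mX_i\|_2\le\sqrt{2p\Lambda_X}$ for all $i$ (a trace concentration for $\mathcal N(\mathbf 0_p,\Sigma_X)$, costing a $2/p$ in the probability), and the elementary $|\mX_i\Delta D_j|\le\|\mX_i\|_2\|\Delta\|_2\|D_j\|_2$, $|D_j^\top\Delta D_j|\le\|\Delta\|_2\|D_j\|_2^2$, produces a constant of order $\delta_\Theta$. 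A Bernstein inequality for i.i.d.\ sub-exponentials at deviation level $\asymp\log m$, union-bounded over $j$ (costing a $2/m$), then yields
\begin{equation*}
\max_{j}\;\bigl|\widehat{\mathrm{KL}}_j\bigr|\;\le\;2\delta_\Theta\sqrt{n\log m}\,\{1+o_p(1)\}
\end{equation*}
on an event of probability at least $1-\tfrac2p-\tfrac2m$; here $\log(m+p)/n=o(1)$ ensures the $\sqrt{n\log m}$ term dominates the additive $\log m$ term in Bernstein and that the trace-concentration and mean-cancellation remainders sit inside the $o_p(1)$.

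The main obstacle is the statistical coupling between the auxiliary/knockoff noise $\eta_{ij}$ and the realized design row $\mX_i=\mXAt_i+\mA_i D$: because $\mA_i$ also drives $\eta_{ij}$, the summand $\xi_{ij}$ is not a product of independent Gaussians, and its two pieces $\eta_{ij}\mX_i\Delta D_j$ and $\tfrac12\eta_{ij}^2 D_j^\top\Delta D_j$ each contribute a summed mean of order $n$; only their difference is lower order in mean and $O(\sqrt{n\log m})$ in fluctuation, so one must handle the combined centered object rather than the two pieces separately and carefully extract its sub-exponential constant. A clean way to organize this is to condition on $(\mXAt,\mA)$ — which fixes the design and renders the $\eta_{ij}$ conditionally independent Gaussians, so a conditional Bernstein applies — and then to bound the resulting conditional mean, itself a centered quadratic form in the Gaussian pair $(\mXAt,\mA)$, via Hanson--Wright; one also has to verify that, away from $\wh\Sigma_X=\Sigma_X$, the order-$n$ mean contributions indeed cancel up to a remainder negligible against $\delta_\Theta\sqrt{n\log m}$. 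The remaining ingredients — Gaussian trace / $\chi^2$ concentration, Bernstein / Hanson--Wright, and the union bound over $j$ — are routine.
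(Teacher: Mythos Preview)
Your proposal is correct and follows essentially the same route as the paper's proof: expand $\widehat{\mathrm{KL}}_j$ in closed form under the Gaussian construction, recognize each summand as a degree-two polynomial in jointly Gaussian variables (hence sub-exponential), control the norms via $\|\mX_i\|_2\le\sqrt{2p\Lambda_X}$ and $|\eta_{ij}|\lesssim\sqrt{\alpha}$ on high-probability events (accounting for the $2/p$ and $2/m$ losses), and finish with a Bernstein/Hanson--Wright bound plus a union bound over $j$. Your identification of the coupling between $\eta_{ij}$ and $\mX_i$ as the main technical obstacle, and your proposed resolution via conditioning on $(\mXAt,\mA)$, is exactly the right move; the remaining mean contribution is indeed second order in $\|\Delta\|$ (of size $O(n\|\Delta\|_2^2)$ rather than $O(n\|\Delta\|_2)$), which is absorbed into the $o_p(1)$ factor in the regime where the theorem is applied (cf.\ the remark following the statement).
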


\begin{remark}
    In many scenarios, $\Vert D_j \Vert_2$ is at constant levels. Therefore, the estimation error $\Vert \Theta - \wh{\Theta} \Vert_2$ being $o_p\left(\frac{1}{\sqrt{n\log p}}\right)$ is sufficient to make $\max_j \wh{\mathrm{KL}}_j$ sufficiently small. Unlabeled data can help to achieve this goal with details being discussed in the remark of Supplementary Material Section E.
\end{remark}

\section{Simulation Experiments}
\label{sec.simu}
In this section, we evaluate the Model-X Split Knockoff method in two canonical settings, the normal distribution setting and the pairwise comparisons setting. The simulation results show that the Split Knockoff method achieves the desired false discovery rate control as well as relatively high selection power, specifically higher compared with Model-X Knockoff when both are applicable.

\subsection{Normal Distribuions}
\label{sec.simu.normal}
\noindent \textbf{Experimental Setting:}
In this section, simulation experiments are conducted under the case where the distribution of the design $\rX$ follows the normal distribution. In particular, rows of the design matrix $\mX  \in \mathbb{R}^{n \times p}$ is generated independent and identically distributed (i.i.d.) from a multivariate normal distribution $\mathcal{N}(0_p, {\bSigma})$, where the covariance matrix ${\bSigma}$ has the correlation structure ${\bSigma}_{i,j} = c^{|i-j|}$ for all $(i,j)$ pairs with correlation parameter $c = 0.5$. 
The true coefficient vector ${\beta}^* \in \mathbb{R}^p$ is defined by: $\beta^*_i = A$ if $i \le k, i \equiv 0, -1\pmod 3$; and $\beta^*_i=0$ otherwise. 
In this section, we consider the logistic regression, where the response $\mY\in \mathbb{R}^n$ is generated from $\mY \sim \text{Bernoulli}\left\{ \sigma(\mX\beta^*) \right\}$, where $\sigma (x):= \frac{1}{1+\exp(-x)}$.

The transformational sparsity is specified by the linear transformation matrix $D \in \mathbb{R}^{m \times p}$ such that $\gamma^{*} = D \beta^{*}$ is sparse. Based on the above specific choice of $\beta^{*}$, the following three kinds of $D$ are considered in this simulation:

\begin{itemize}
    \item $\beta^{*}$ itself is sparse, so that one can take $D_1 = I_p$, leading to $m = p$.
    
    \item $\beta^{*}$ is a uni-dimensional piecewise constant function, so that one can take $D_2 \in \mathbb{R}^{(p-1) \times p}$ as: $ D_2(i, i) = 1 $, $ D_2(i, i+1) = -1 $ for $ i \in \{1, \dots, p-1\} $, and $ D_2(i, j) = 0 $ for other pairs of $(i, j)$. In this case, $ m = p - 1 < p $.
    
    \item Combining the above two cases, one can take $ D_3 = \left( D^\top_1, D^\top_2 \right)^\top  \in \mathbb{R}^{(2p-1) \times p} $, where $ m = 2p - 1 > p $.
\end{itemize}
Throughout this section, the target FDR level $q$ is set to be $0.2$, while the coefficients in the above simulation design are taken to be $p = 100$, $k=20$, and $A = 1$. For the case where $D_1 = I_p$, the performance of Model-X Knockoffs\footnote{The Model-X Knockoff method is implemented using the \href{https://github.com/msesia/knockoff-filter}{official repository}.} is also evaluated for comparisons. The Model-X (Split) Knockoff copies are generated from the data matrix $\X$ through the estimated covariance matrix $\wh{\bSigma} = \X^\top \X / n$. Specifically, the Split Knockoff copies $\rA$ and $\rAt$ are generated through Equation \eqref{eq.A-tilde-general}. 

\begin{figure}[ht]
    \centering
    \subfigure[$D = D_1$ (MK) ]{
    \begin{minipage}[t]{0.225\textwidth}
    \centering
    \includegraphics[width=1\textwidth]{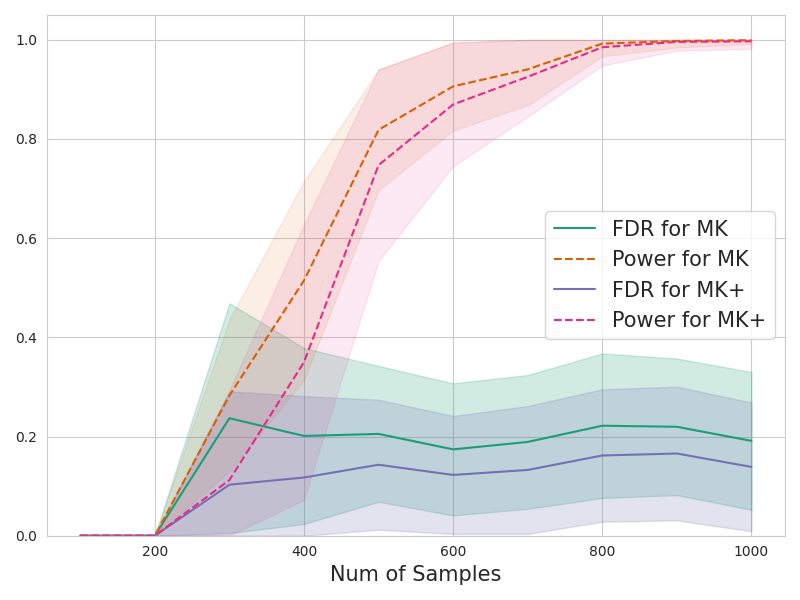}
    \end{minipage}%
    }
    \subfigure[$D = D_1$ (MSK)]{
    \begin{minipage}[t]{0.225\textwidth}
    \centering
    \includegraphics[width=1\textwidth]{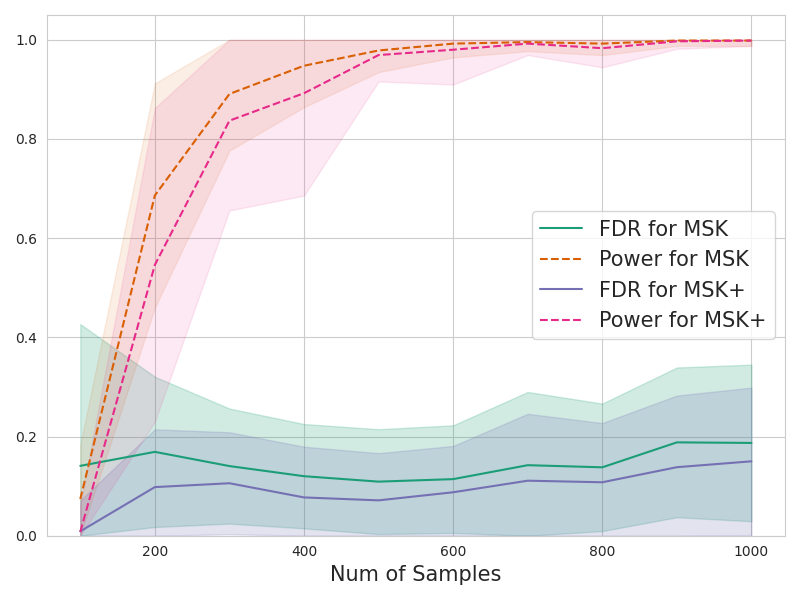}
    \end{minipage}%
    }
    \subfigure[$D = D_2$ (MSK)]{
    \begin{minipage}[t]{0.225\textwidth}
    \centering
    \includegraphics[width=1\textwidth]{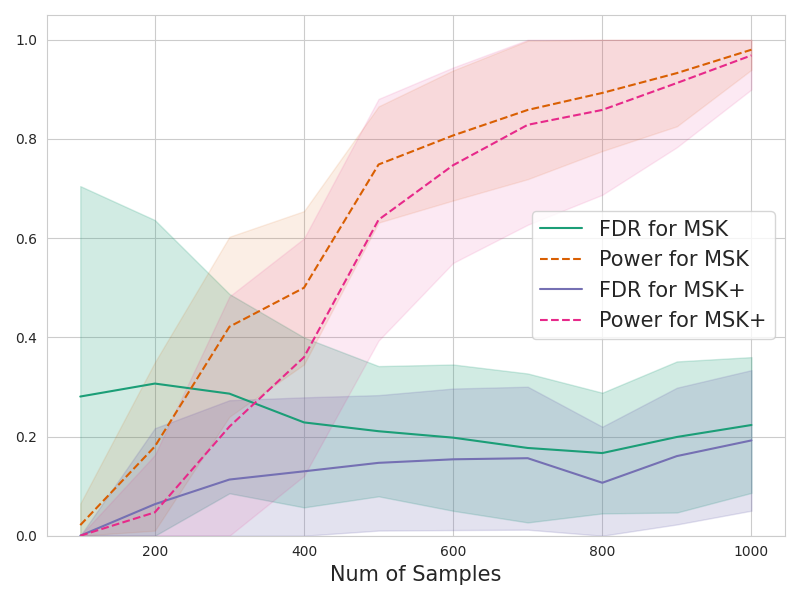}
    \end{minipage}%
    }
    \subfigure[$D = D_3$ (MSK)]{
    \begin{minipage}[t]{0.225\textwidth}
    \centering
    \includegraphics[width=1\textwidth]{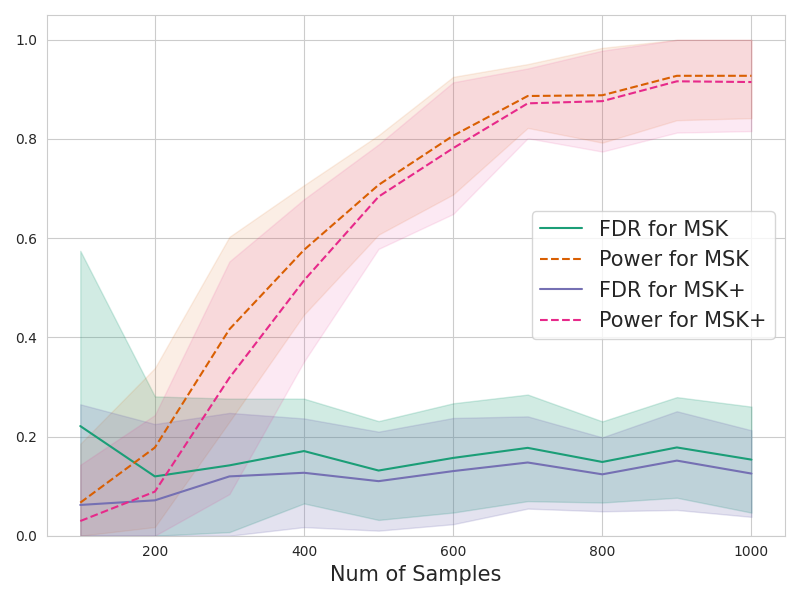}
    \end{minipage}%
    }
    \caption{Performance of Model-X Knockoffs (when applicable) and Model-X Split Knockoffs when $q = 0.2$ and the number of samples varies. The curves indicate the mean performance, with the shaded areas representing the standard deviation across 200 simulations. For brevity, we denote Model-X Knockoffs and Model-X Split Knockoffs as MK and MSK, respectively.}
    \label{fig:numplot comparisons logistic}
\end{figure}

In the first two figures of Figure \ref{fig:numplot comparisons logistic}, we compare the performance of Model-X Knockoffs and Model-X Split Knockoffs when $D = D_1 = I_p$. In both figures, with the increase of sample size, the selection power of both methods gradually increases from 0 to 1, while the FDR of both methods are under control for all sample sizes. In terms of selection power, the Model-X Split Knockoff demonstrably yields superior results across all tested sample sizes, a finding supported by Proposition \ref{prop: higher selection power}. We hypothesize that this performance gain is a direct consequence of the improved incoherence conditions, which arise from the orthogonal design of $\rA$ and $\rAt$ in Equations \eqref{eq.A-tilde-general}.

In the last two figures of Figure \ref{fig:numplot comparisons logistic}, the performance of Model-X Split Knockoffs is evaluated in the scenarios where $D$ is non-trivial, i.e., $D\neq I_p$. In both cases, the FDR of Model-X Split Knockoffs (MSK+) is under control universally for all choices of sample sizes, with MSK higher as defined. Meanwhile, the selection power of Model-X Split Knockoffs follows a similar trend as the case where $D=I_p$, i.e., the selection power gradually increases from 0 to 1 when the sample size enlarges.

\subsection{Pairwise Comparisons}
\label{sec: pairwise simulations}

\noindent \textbf{Experimental Settings:} In this section, the pairwise comparison problem among $p$ objects with $n$ noisy observations of comparisons is considered. The scores for $p$ objects, denoted as $\beta^* \in \mathbb{R}^p$, are generated from a standard normal distribution $\mathcal{N}(0, 1)$. To introduce sparsity into the dataset, the scores in $\beta^*$ are truncated according to a sparsity ratio $k$ set at 0.5. Specifically, only the first $k \cdot p$ components of $\beta^*$ are retained, while the remaining components are set to zero.

The rows of the pairwise comparison design matrix $\X \in \{0, 1, -1\}^{n \times p}$ are generated independently. For the $i$-th row, two indices $1\le j<k\le p$ are chosen uniformly at random with probability $\frac{2}{p(p-1)}$ for each possible pair of $(j, k)$. Then we set $\mX_{i, j} = 1$, $\mX_{i, k} = -1$, and $\mX_{i, l} = 0$ for $l\neq j, k$. This specific construction ensures that each row of $\mX$ represents a (noisy) comparison between two objects. 

In this simulation experiment, the performance of Model-X Split Knockoff is evaluated under the Bradley-Terry model, where the response vector $\mY \in \mathbb{R}^n$ is generated by the logistic regression $\mY \sim \text{Bernoulli}\left\{ \sigma(\mX\beta^*) \right\}$. The transformational matrix $D \in \mathbb{R}^{\frac{p(p-1)}{2} \times p}$ is taken to be the graph difference operator on a fully connected graph with $p$ vertices to conduct pairwise comparisons.

\begin{figure}[ht]
    \centering
    \subfigure[Bootstrap+]{
    \begin{minipage}[t]{0.3\textwidth}
    \centering
    \includegraphics[width=0.8\textwidth]{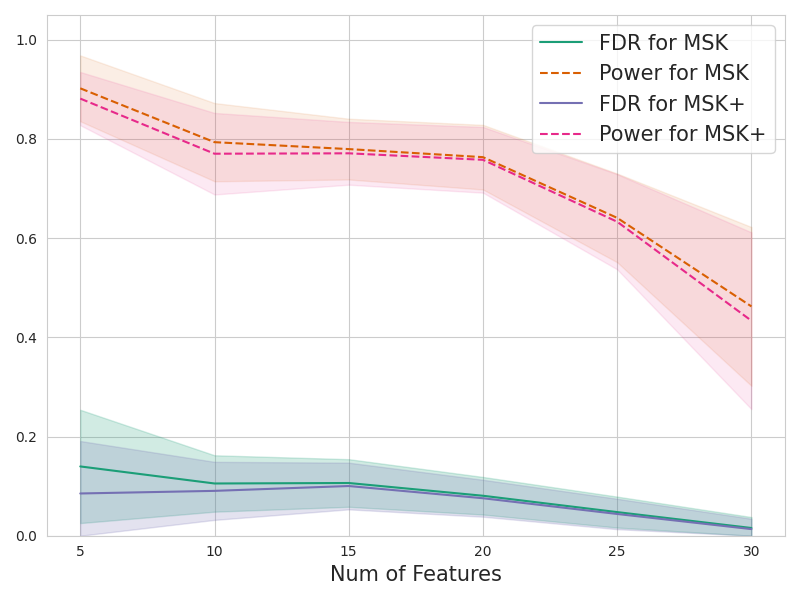}
    \end{minipage}%
    }
    \subfigure[Sequential]{
    \begin{minipage}[t]{0.3\textwidth}
    \centering
    \includegraphics[width=0.8\textwidth]{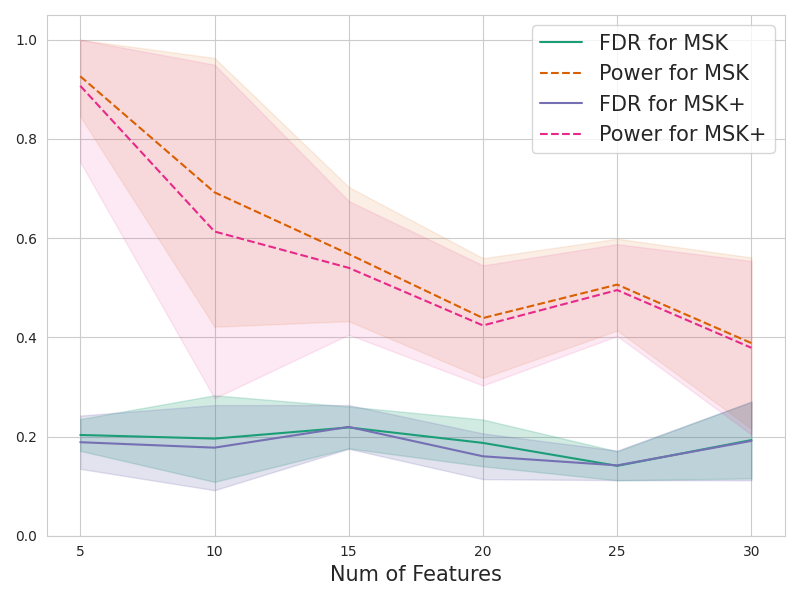}
    \end{minipage}%
    }
    \subfigure[Runtime (Second)]{
    \begin{minipage}[t]{0.3\textwidth}
    \centering
    \includegraphics[width=0.8\textwidth]{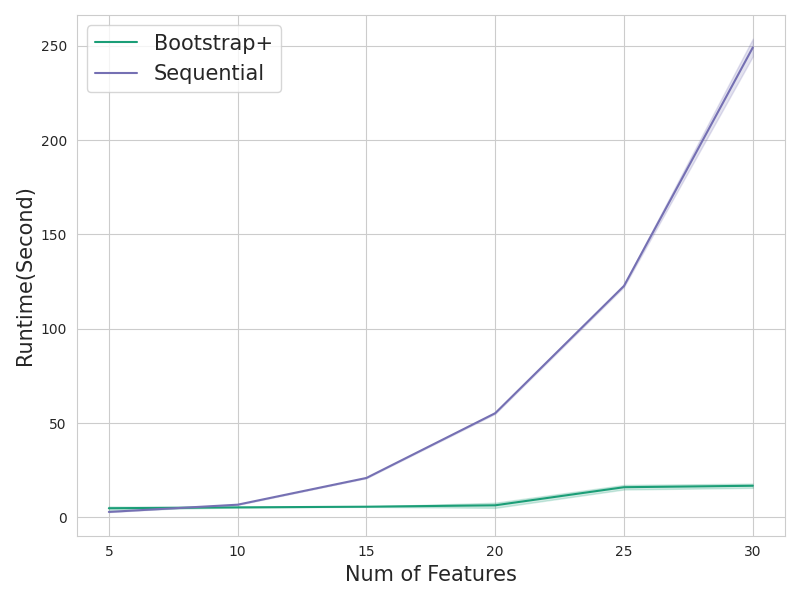}
    \end{minipage}%
    }
    \caption{Performance and runtime of Model-X Split Knockoff when $q = 0.2$ as the number of features varies for bootstrap+ and sequential constructions. The curves indicate the mean performance, with the shaded areas representing the standard deviation across 200 simulations.}
    \label{fig:featplot pairwise comparisons}
\end{figure}

Figure \ref{fig:featplot pairwise comparisons} compares the performance of bootstrap+ constructions of Model-X Split Knockoff copies against the sequential constructions. It shows that, although the two types of constructions achieve desired FDR control, the selection power of the bootstrap+ constructions is higher compared with the sequential constructions for all choices of the number of features. This is explained in Section \ref{sec: construct pairwise comparisons} that the bootstrap+ constructions makes sure $\rXAt$ to be independent from $\rA$, which helps improve the incoherence conditions and thus the selection power.

Moreover, when $p$ is large, the sequential construction may need to estimate too many conditional distributions, so the computation cost explodes. On the other hand, the bootstrap+ constructions do not suffer from this issue and handle such a case easily. As shown in Figure~\ref{fig:featplot pairwise comparisons}, the sequential method exhibits a much steeper increase in running time compared to the bootstrap+ method, which requires significantly less computation time under the same settings. 

Due to the above two points, the bootstrap+ construction will be the default construction to implement Model-X Split Knockoff for pairwise comparisons.

\section{Alzheimer’s Disease}
\label{sec.alz}
We employ the Model-X Split Knockoff method in this study to analyze lesion regions and their connectivity within the brains of individuals affected by Alzheimer's Disease (AD), which has been the subject of growing research focus.

We apply our method to the structural Magnetic Resonance Imaging (MRI) in the \href{http://adni.loni.ucla.edu}{ADNI} dataset. We extracted 752 samples that are composed of 126 AD, 433 Mild Cognitive Impairment (MCI), and 193 Normal Controls (NC). Applying Automatic Anatomical Labeling (AAL), we extract the volumes of $90$ processed brain regions for all $752$ samples. Therefore, for our design matrix $\mX \in \mathbb{R}^{n \times p}$, \(\mX_{i,j}\) denotes the column-wise normalized volume of region \(j\) for sample \(i\). Besides, \(\mY \in \mathbb{R}^n\) is taken as the Alzheimer's Disease Assessment Scale (ADAS), a measure initially developed to evaluate the severity of cognitive dysfunction~\citep{rosen1984new}, which was later shown to effectively distinguish between clinically diagnosed Alzheimer's disease and normal controls~\citep{zec1992alzheimer}.

\subsection{Region Selection}

The region selection results of Model-X Split Knockoff and Model-X Knockoff are shown in Table~\ref{tab:selected_regions}. Overall, Model-X Split Knockoff identifies 12 brain regions associated with Alzheimer's disease, whereas the standard Model-X Knockoff selects only 8, demonstrating the enhanced selection power of our approach.

\begin{table}[ht]
    \centering
    \caption{Selected Regions by Model-X Knockoff and Model-X Split Knockoff on Alzheimer’s Disease ($q = 0.2$). The full names of the regions are provided in Supplementary Material Section I.}
    \resizebox{\textwidth}{!}{
    \begin{tabular}{ccccccccccccccc}
    \toprule
    \textbf{Region} & IPL R & ORBsup L & MFG R & ROL L & SFGmed R & PCG L  & PCG R & HIP L & HIP R & LING L&FFG R & MTG L   & MTG R & ITG L\\
    \midrule
    \textbf{Model-X Knockoff} & \checkmark &  & \checkmark &  &  &  &           & \checkmark & \checkmark & \checkmark       & \checkmark   & \checkmark & \checkmark &  \\
    \textbf{Model-X Split Knockoff} & & \checkmark & \checkmark & \checkmark & \checkmark & \checkmark   & \checkmark  & \checkmark & \checkmark & \checkmark      & \checkmark  & \checkmark & & \checkmark \\
    \bottomrule
    \end{tabular}
    }
    \label{tab:selected_regions}
\end{table}

Both methods select the hippocampus (``HIP L", ``HIP R"), middle temporal lobes (``MTG L", ``MTG R"), middle frontal gyrus (``MFG R"), and several cerebral cortex areas such as the lingual gyrus (``LING L", ``LING G") and fusiform gyrus (``FFG R"). These regions have been previously reported as potential biomarkers for Alzheimer's disease in the literature \citep{lee2020posterior, peters2009neural, yang2019study}. Most of additional regions identified by our method have been reported to show early degeneration in previous studies \citep{van2000orbitofrontal,cajanus2019association,pengas2010focal,lee2020posterior}, with the exception of the Superior frontal gyrus, orbital part (“ORBsup L”), which may represent a false discovery. In particular, significant structural changes in nodal centrality within the rolandic operculum (``ROL L") have been observed in the AD population \citep{yao2010abnormal,bajo2015scopolamine}. Besides, the posterior cingulate cortex (``PCG L", ``PCG R"), which is involved in episodic memory, was found to exhibit atrophy at the earliest clinical stages of sporadic AD \citep{lee2020posterior}.

\subsection{Connection Selection}
In this section, we detect abnormal connections between brain regions. We represent brain connectivity as a graph $G = (V, E)$, where $V$ denotes the brain regions and $E$ represents edges connecting adjacent regions. Based on this formulation, the transformational matrix $D$ is defined as a graph difference operator, such that $(D\beta)_{(i,j)} = \beta_i - \beta_j$ for adjacent brain regions $(i, j)$. Since adjacent brain regions typically show similar activity, a significant signal on an edge suggests connections between relatively stable regions and severely atrophied regions contributing to the disease. The abnormal connections identified by our method are presented in Table~\ref{tab:selected_connect}.

\begin{table}[ht]
    \centering
    \caption{Selected Connections by Model-X Split Knockoff on Alzheimer’s Disease ($q = 0.2$). The full names of the regions are provided in Supplementary Material Section I.}
    \resizebox{.7\textwidth}{!}{
    \begin{tabular}{c|c|c|c}
       \toprule
        HIP L \& PHG L&
        HIP L \& LING L&
        HIP L \& FFG L&
        HIP L \& PUT L \\
        HIP L \& HES L&
        HIP R \& PHG R&
        HIP R \& TPOmid R&
        HIP R \& ITG R \\
        HIP L \& INS L&
        LING L \& FFG L&
        LING R \& PHG R&
        ROL L \& HES L  \\
        MTG L \& INS L&
        MTG L \& MOG L&
        MTG R \& TPOsup R&
        PCG R \& PCUN R \\
        PCL R \& IPL R & & & \\
        \bottomrule
    \end{tabular}}
    \label{tab:selected_connect}
\end{table}

Notably, many of the detected connections involve the hippocampus (``HIP L", ``HIP R"), which exhibits the earliest and most significant atrophy in Alzheimer's disease (AD) \citep{greicius2003functional,yang2019study,cajanus2019association,yao2010abnormal}. 

In addition, some connections involve the middle temporal lobes (``MTG L", ``MTG R"). The middle temporal lobe, a key component of the temporal cortex, plays an essential role in semantic memory and language processing, and is implicated in the early pathological changes of Alzheimer’s disease (AD) \citep{chen2022spatially}. Moreover, the lingual regions (``LING L", ``LING R") were involved in the selected connections. This finding is consistent with previous studies showing that the lingual gyrus—associated with visual processing and visual memory—is important in the early detection of Alzheimer's disease.

\section{Pairwise Comparison: WorldCollege Dataset}
\label{sec.college}
In this section, we apply our method to \href{http://www.allourideas.org/worldcollege}{the World College dataset}. For our experiments, we subsample the dataset to exclude colleges with very few annotated comparisons. Our specific sub-dataset comprises $p=69$ colleges that rank in the top 80 of the QS rankings among the original 261 colleges. This subsampling yields $n=2,112$ annotated pairwise comparisons among these 69 colleges, resulting in $m=608$ unique pairwise comparisons. More details on the descriptions of dataset and the QS rankings are in Supplementary Material Section I.

In this experiment, the Bradley-Terry model is adopted, and thus $D\in\mathbb{R}^{m\times p}$ is defined as: for each pair of college ($i_1<i_2$) compared by the annotator, there exists an unique row $j$ in $D$, such that $D_{j, i_1} = 1$, $D_{j, i_2} = -1$, and $D_{j, k} = 0$ for $k\neq i_1, i_2$.

\begin{figure}[ht]
    \centering
    \includegraphics[width=\textwidth]{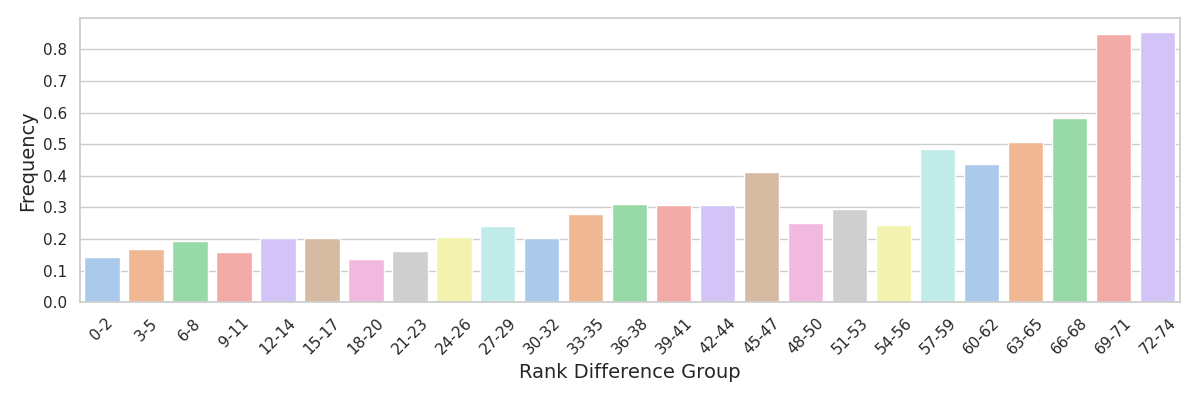}
    \caption{Selection Frequency of Pairwise Comparisons by Model-X Split Knockoffs Across 100 Runs. The bar chart displays the average selection frequency for college pairs identified by our method. Each bar corresponds to pairs within $D$ whose QS ranking differences are categorized by the range shown on the $X$-axis.}
    \label{fig:col_prob_sampled}
\end{figure}

Figure \ref{fig:col_prob_sampled} presents the selection frequency from 100 runs of our method, showing a clear positive correlation with the QS ranking difference between pairs. This trend demonstrates that our method effectively identifies university pairs with significant ranking gaps, which suggests a low false discovery rate. For instance, connections between colleges with large rank differences, such as Massachusetts Institute of Technology (1) and Osaka University (63), or Harvard University (rank 3) and Tohoku University (rank 75), were detected in 73 and 89 out of 100 experiments respectively. The consistent selection of pairs with rank differences over 60 underscores the method's sensitivity in uncovering substantial disparities.

\section{Conclusion}

We introduced the Model-X Split Knockoff method, a framework extending both Split and Model-X Knockoffs to control the FDR for transformation selection in a broad range of statistical models. Our key innovation, an auxiliary randomized design, resolves the challenge of combining random designs with deterministic transformations. This approach not only enables inference on transformational sparsity but also provides superior selection power over standard Model-X Knockoffs. The method's effectiveness is validated by simulations and real-world applications.

\bibliography{ref}

\end{document}